\newtheorem{definition}{Definition}
\newtheorem{theorem}{Theorem}
  \def\widebar{\accentset{{\cc@style\underline{\mskip10mu}}}}
\newcommand{\la}{\lambda}
\begin{document}

\title{$M$-Channel Critically Sampled Spectral Graph Filter Banks With Symmetric Structure}%

\author{Akie~Sakiyama,
Kana~Watanabe, and
Yuichi~Tanaka
\thanks{This work was supported in part by JST PRESTO Grant Number JPMJPR1656.}
\thanks{The authors are with the Graduate School of BASE, Tokyo University of Agriculture and Technology, Koganei, Tokyo, 184-8588 Japan. Y. Tanaka is also with PRESTO, Japan Science and Technology Agency, Kawaguchi, Saitama, 332-0012, Japan (email: sakiyama@msp-lab.org; nbkn@msp-lab.org; ytnk@cc.tuat.ac.jp).}
}

\markboth{}{}
\maketitle

\begin{abstract}
This paper proposes a class of $M$-channel spectral graph filter banks with a symmetric structure, that is, the transform has sampling operations and spectral graph filters on both the analysis and synthesis sides. The filter banks achieve maximum decimation, perfect recovery, and orthogonality.  Conventional spectral graph transforms with decimation have significant limitations with regard to the number of channels, the structures of the underlying graph, and their filter design. The proposed transform uses sampling in the graph frequency domain. This enables us to use any variation operators and apply the transforms to arbitrary graphs even when the filter banks have symmetric structures. We clarify the perfect reconstruction conditions and show design examples. An experiment on graph signal denoising conducted to examine the performance of the proposed filter bank is described.
\end{abstract}
\begin{IEEEkeywords}
Graph signal processing, spectral graph filter bank, critically sampled design, spectral domain sampling
\end{IEEEkeywords}
%

\section{Introduction}
\label{sec:intro}

\subsection{Motivation}
Signal processing on graphs has been developed to allow traditional signal processing techniques to be utilized on data with irregular complex structures. The data are processed while considering their structures, defined through graphs, and can be applied to many practical applications, such as social networks\cite{Sandry2013}, images/videos \cite{Hu2013, Cheung2018}, traffic\cite{Crovel2003}, and sensor networks\cite{Sakiya2016,Sakiya2017}.

Wavelets and filter banks with decimation are important techniques for processing or analyzing graph signals\cite{Sakiya2016b}, as well as regular signals\cite{Gan2003,Tran1997,Tanaka2004,Tanaka2009}. Although there are some approaches for $M$-channel graph filter banks with (maximum) decimation\cite{Teke2017,Jin2017,Pei2017}, they have a significant limitation in terms of the graph structure or filter design, or require a complex interpolation on the synthesis side for ensuring a perfect reconstruction. Popular approaches include critically sampled spectral graph wavelets\cite{Narang2012,Narang2013} and $M$-channel oversampled spectral graph filter banks\cite{Tanaka2014a,Sakiya2014a}. Although such techniques have relatively fewer restrictions than other conventional methods, they are applicable solely to the bipartite graphs and can only use normalized graph Laplacians as a variation operator.

All of these approaches utilize sampling operations in the vertex domain, which selects maintained vertices and samples according to the graph structure or applications. The vertex domain sampling corresponds to the sampling in the time/spatial domain for regular signals. In traditional signal processing, sampling in the time/spatial domain can also be represented in the frequency domain, and both sampling results are the same \cite{Vaidya1993}. Unfortunately, graph signal processing does not inherit this property, i.e., relationships between the spectra of the original and downsampled signals are not clarified for the vertex domain sampling, except for a bipartite case. Hence, a downsampled signal occasionally has a significantly different spectrum from that of the original signal.

This paper proposes the use of $M$-channel critically sampled spectral graph filter banks with a symmetric structure. This is a generalized version of our work \cite{Watana2018,Sakiya2018} from two- to $M$-channel structures. It uses down- and upsampling operations in the graph frequency domain\cite{Tanaka2017}, and therefore, the decomposed signal keeps the spectral information of the original graph signal. The proposed filter banks can be applied to any graph signals regardless of the structure of the underlying graph and can use any variation operators, such as a combinatorial/normalized graph Laplacian and graph adjacency matrix. We clarify the perfect reconstruction conditions, which are fortunately similar to those of regular signals. We also demonstrate that any filter sets of $M$-channel real-valued linear phase filter banks for regular signals can also be used to that for graph signals with perfect recovery. We provide filter design examples and apply the proposed method to the denoising of a graph signal.

\subsection{Notation}
\label{sec:notations}
A graph $\mathcal{G}$ is represented as $\mathcal{G} = (\mathcal{V}, \mathcal{E})$, where $\mathcal{V}$ and $\mathcal{E}$ denote sets of vertices and edges, respectively. The $(m,n)$-th element of an adjacency matrix $\mathbf{A}$ is $a_{mn} > 0$ if the $m$th and $n$th vertices are connected, or zero otherwise, where $a_{mn}$ denotes the weight of the edge between $m$ and $n$. The degree matrix $\mathbf{D}$ is a diagonal matrix, and its $m$th diagonal element is $d_{mm} = \sum_n a_{mn}$. Combinatorial and symmetric normalized graph Laplacians are defined as $\mathbf{L}:=\mathbf{D}-\mathbf{A}$ and $\bm{\mathcal{L}} := \mathbf{D}^{-1/2}\mathbf{L}\mathbf{D}^{-1/2}$, respectively. Because $\mathbf{L}$ (or $\bm{\mathcal{L}}$) is a real symmetric matrix, $\mathbf{L}$ is always decomposed into $\mathbf{L} = \mathbf{U} \bm{\Lambda} \mathbf{U}^\top$, where $\mathbf{U} = [{\bm u}_{0}, \ldots, {\bm u}_{N-1}]$ is an orthonormal eigenvector matrix; $\bm{\Lambda} = \text{diag}(\la_0, \la_1, \ldots, \la_{N-1})=\text{diag}({\bf \Lambda}_0,{\bf \Lambda}_1,\ldots,{\bf \Lambda}_{M-1})$ is a diagonal eigenvalue matrix in which the eigenvalue is the \textit{graph frequency}, ${\bf \Lambda}_i=\text{diag}(\la_{Ni/M}, \ldots, \la_{N(i+1)/M-1})$; and $\cdot^\top$ represents the transpose of a matrix or vector. For simplicity, we assume that the eigenvalues $\lambda_i$ have the following order: 
\begin{equation}0=\lambda_0<\lambda_1\leq\lambda_2\leq\cdots\leq\lambda_{N-1}=\lambda_\text{max}.
\end{equation}

Here, $\bm{f}\in \mathbb{R}^N$ is a graph signal, where the $n$th sample $f[n]$ is assumed to be located on the $n$th vertex of the graph.
The graph Fourier transform (GFT) is defined as $\widetilde{\bm f}={\bf U}^\top{\bm f}$,
and the inverse GFT is ${\bm f}={\bf U}\widetilde{\bm f}$. The graph spectral filtering can be represented as ${\bm f}_\text{out} = {\bf U}H({\bf \Lambda}){\bf U}^\top{\bm f}_\text{in}$ with a filter kernel, $H(\cdot)$.

\begin{figure}[tp]%
\centering
\includegraphics[width = .9\linewidth]{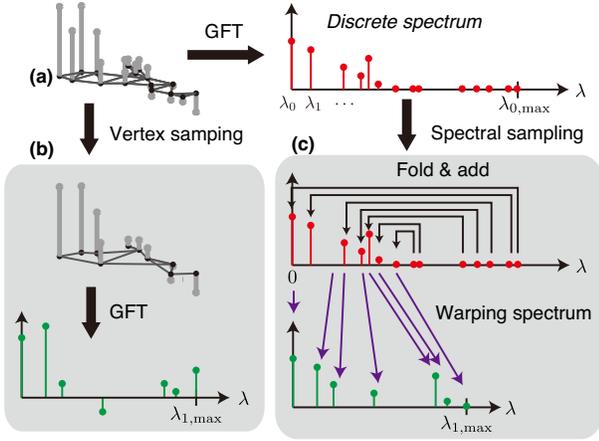}\\
\caption{Downsampling of graph signals. The signal is downsampled by $2$ and bandlimited. The shaded areas represent \textit{different} signals: (a) original graph signal, (b) vertex domain downsampling, and (c) graph frequency domain downsampling.}
\label{fig:sampling_gsp}
\end{figure}
\section{Sampling of Graph Signals}
\label{sec:con}
In this section, we introduce the sampling of graph signals in graph frequency domain\cite{Tanaka2017}. Its relationship with the sampling in vertex domain is illustrated in Fig. \ref{fig:sampling_gsp}.
\begin{definition}[Downsampling of graph signals in graph frequency domain]\label{def:GD_spectral} 
Let $\mathbf{L}_0 \in \mathbb{R}^{N \times N}$ and $\mathbf{L}_1 \in \mathbb{R}^{N/M \times N/M}$ be the graph Laplacians for the original graph and for the reduced-size graph\footnote{$M$ is assumed to be a divisor of $N$ for the sake of simplicity.}, respectively, and assume that their eigendecompositions are given by $\mathbf{L}_0 = \mathbf{U}_0 \bm{\Lambda}_0 \mathbf{U}^\top_0$ and $\mathbf{L}_1 = \mathbf{U}_1 \bm{\Lambda}_1 \mathbf{U}^\top_1$. The downsampled graph signal $\bm{f}_{d} \in \mathbb{R}^{N/M}$ in the graph frequency domain from the signal ${\bm{f}} \in \mathbb{R}^{N}$ is defined as 
\begin{equation}
\bm{f}_{d} = \mathbf{U}_1 \mathbf{S}_d \mathbf{U}^\top_0 \bm{f},
\label{S_d}
\end{equation}
 where $\mathbf{S}_d = [
\mathbf{I}_{N/M} \ \mathbf{J}_{N/M} \  \ldots\ \mathbf{I}_{N/M}
\ \mathbf{J}_{N/M}]$, in which $\mathbf{I}_N$ is an $N\times N$ identity matrix and $\mathbf{J}_N$ is an $N\times N$ counter identity matrix.
\end{definition}

\begin{definition}[Upsampling of graph signals in the graph frequency domain]\label{def:GU_spectral} Let $\mathbf{L}_1 \in \mathbb{R}^{N/M \times N/M}$ and $\mathbf{L}_0 \in \mathbb{R}^{N \times N}$ be the graph Laplacians for the original graph and the graph with increased size, respectively. The upsampled graph signal $\bm{f}_{u} \in \mathbb{R}^{N}$ in the graph frequency domain is defined as 
\begin{equation}
\bm{f}_{u} = \mathbf{U}_0 \mathbf{S}_u \mathbf{U}^\top_1 \bm{f}_d,
\label{S_u}
\end{equation}
where $\mathbf{S}_u=\mathbf{S}_d^\top$.
\end{definition}

\begin{figure}[t]
  \centering
	\includegraphics[width=\linewidth]{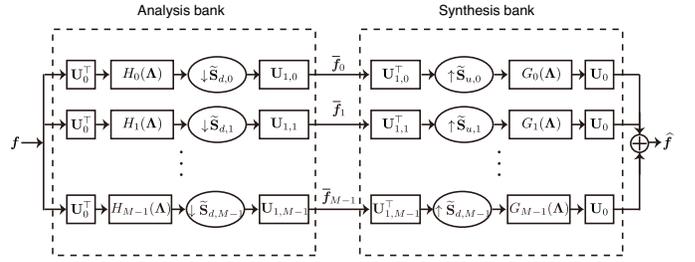}
  \caption{Scheme of $M$-channel spectral graph filter bank.}
  \label{p_st}
\end{figure}

\section{$M$-Channel Spectral Graph Filter Banks}
\label{sec:pro}
\subsection{Framework}
The structure of the proposed $M$-channel spectral graph filter bank is shown in Fig. \ref{p_st}. It is clear that it has a symmetrical structure similar to that of regular signals \cite{Strang1996}.
Here, $H_m({\bf \Lambda})$ and $G_m({\bf \Lambda})$ ($m=0,\ldots,M-1$) are analysis and synthesis filters, respectively. We use slightly modified sampling matrices from \eqref{S_d} and \eqref{S_u}, which are defined as
\begin{equation}
\widetilde{\mathbf{S}}_{d,m}=\begin{cases} 
\begin{bmatrix}
\mathbf{I}_{N/M} &\mathbf{J}_{N/M}& \ldots
\end{bmatrix}&\text{for even $m$},\\
\begin{bmatrix}
\mathbf{I}_{N/M} &-\mathbf{J}_{N/M}& \ldots
\end{bmatrix}
&\text{for odd $m$},\\
\end{cases}
\end{equation}
where $\widetilde{\mathbf{S}}_{u,m} = \widetilde{\mathbf{S}}_{d,m}^\top$ for all $m$.

The $m$th subband signals after the analysis and synthesis transforms are represented respectively as follows.
\begin{equation}
\label{eqn:f_k}
\begin{split}
\widebar{\bm{f}}_m & = \mathbf{U}_{1,m}\widetilde{\mathbf{S}}_{d,m}H_m(\bm{\Lambda})\mathbf{U}_0^\top\bm{f}\\
\widehat{\bm{f}}_m & = \mathbf{U}_0G_m(\bm{\Lambda})\widetilde{\mathbf{S}}_{u,m}\mathbf{U}_{1,m}^\top\widebar{\bm{f}}_m,
\end{split}
\end{equation}
where $\mathbf{U}_{1,m}$ is an arbitrary eigenvector matrix for the $m$th subband.

\subsection{Perfect Reconstruction Condition}
The following theorem describes the perfect reconstruction condition for the proposed scheme.
\begin{theorem}
Assume that $k$ and $p$ are integers within the range of $k\in[0, N]$ and $p\in(-\lfloor kM/N \rfloor, (M-1)-\lfloor kM/N \rfloor]$ ($p\neq 0$), respectively. The $M$-channel spectral graph filter bank defined in the previous subsection is a perfect reconstruction transform, i.e., $\widehat{\bm{f}} = c^2 \bm{f}$, where $c$ is an arbitrary real number, if the graph spectral responses of the filters satisfy the following relationships for all $k$ and $p$.
\begin{equation}
\begin{split}
&\sum_{m=0}^{M-1}G_m(\lambda_k)H_m(\lambda_k)=c^2,
\\
&\sum_{m=0}^{M-1} G_m(\lambda_k)H_m(\lambda_{k+Np/M})=0\text{ for even }p,\\
&\sum_{m=0}^{M-1}(-1)^mG_m(\lambda_k)H_m(\lambda_{(2p+1)N/M-k})=0\text{ for odd }p.
\label{AC_g}
\end{split}
\end{equation}
\end{theorem}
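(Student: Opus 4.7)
\medskip

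\noindent\textbf{Proof plan.}
My plan is to reduce perfect reconstruction to a single matrix identity on the spectral domain of the original graph, then evaluate the resulting matrix block by block against $c^2\mathbf{I}_N$.

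First, I would substitute both lines of \eqref{eqn:f_k} into $\widehat{\bm f}=\sum_{m=0}^{M-1}\widehat{\bm f}_m$. Because each $\mathbf{U}_{1,m}$ is an orthonormal eigenvector matrix, $\mathbf{U}_{1,m}^\top\mathbf{U}_{1,m}=\mathbf{I}_{N/M}$, so the eigenvector factors of the reduced graph cancel and
\[
\widehat{\bm f} \;=\; \mathbf{U}_0\,\mathbf{T}\,\mathbf{U}_0^\top\bm f,\qquad \mathbf{T} \;:=\; \sum_{m=0}^{M-1} G_m(\bm{\Lambda})\,\widetilde{\mathbf{S}}_{u,m}\widetilde{\mathbf{S}}_{d,m}\,H_m(\bm{\Lambda}).
\]
Since $\mathbf{U}_0$ is orthogonal, the PR requirement $\widehat{\bm f}=c^2\bm f$ for every $\bm f$ is equivalent to the single identity $\mathbf{T}=c^2\mathbf{I}_N$. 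This reduces the theorem to an explicit computation of $\mathbf{T}$.

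Next, I would view $\widetilde{\mathbf{S}}_{u,m}\widetilde{\mathbf{S}}_{d,m}$ as an $M\times M$ array of $(N/M)\times(N/M)$ blocks. Using $\mathbf{J}_{N/M}^2=\mathbf{I}_{N/M}$ together with the alternating $\mathbf{I}/{\pm}\mathbf{J}$ pattern in the definition of $\widetilde{\mathbf{S}}_{d,m}$, a short block multiplication shows that the $(I,J)$-block equals $\mathbf{I}_{N/M}$ when $I+J$ is even and $(-1)^m\mathbf{J}_{N/M}$ when $I+J$ is odd (the $(-1)^m$ sign surviving only on opposite-parity blocks because the $\pm\mathbf{J}$ alternation appears exactly once in that case). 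Sandwiching with the diagonal blocks $G_m(\bm{\Lambda}_I)$ and $H_m(\bm{\Lambda}_J)$, the $(I,J)$-block of $\mathbf{T}$ becomes either a diagonal matrix whose $\ell$th entry is $\sum_m G_m(\lambda_{NI/M+\ell})H_m(\lambda_{NJ/M+\ell})$ (same-parity case), or an anti-diagonal matrix whose $(\ell,\,N/M-1-\ell)$ entry is $\sum_m (-1)^m G_m(\lambda_{NI/M+\ell})H_m(\lambda_{N(J+1)/M-1-\ell})$ (opposite-parity case).

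Finally, I would match $\mathbf{T}=c^2\mathbf{I}_N$ block by block. The diagonal blocks $I=J$ yield the first equation of the theorem after setting $k=NI/M+\ell$. The off-diagonal blocks with $I+J$ even, parameterized by $k=NI/M+\ell$ and $p=J-I$, give the second equation over the stated range of even $p\neq 0$. The off-diagonal blocks with $I+J$ odd, once the anti-diagonal reindexing $\ell\mapsto N/M-1-\ell$ is pushed back to global eigenvalue indices, give the third equation over odd $p$. The converse is immediate because the $(I,J)$-blocks involve disjoint subsets of filter values and can be vanished independently. The main obstacle I expect is nothing conceptually deep but purely bookkeeping: carefully tracking the $(-1)^m$ sign that arises solely from the odd columns of $\widetilde{\mathbf{S}}_{d,m}$ in opposite-parity blocks, and translating the block-level anti-diagonal pairing into the partner-index formula that appears on the right-hand side of \eqref{AC_g}.
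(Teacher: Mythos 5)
Your proposal follows essentially the same route as the paper's proof: both reduce perfect reconstruction to the identity $\sum_m G_m(\bm{\Lambda})\widetilde{\mathbf{S}}_{u,m}\widetilde{\mathbf{S}}_{d,m}H_m(\bm{\Lambda})=c^2\mathbf{I}_N$, evaluate $\widetilde{\mathbf{S}}_{u,m}\widetilde{\mathbf{S}}_{d,m}$ as an $M\times M$ block array of $\mathbf{I}_{N/M}$ and $(-1)^m\mathbf{J}_{N/M}$ blocks according to parity, and match the diagonal/anti-diagonal blocks to the three conditions in \eqref{AC_g}. Your version is, if anything, slightly more explicit than the paper's (you justify the cancellation of $\mathbf{U}_{1,m}$ and work out the anti-diagonal entry indices), so it is correct and materially identical in approach.
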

\begin{proof}
The reconstructed signal of the $M$-channel spectral graph filter bank is represented as follows:
\begin{equation}
\begin{split}
\widehat{\bm f}&={\bf U}\left(\sum_{m=0}^{M-1}G_m({\bf \Lambda})\widetilde{\bf S}_{u,m}\widetilde{\bf S}_{d,m}H_m({\bf \Lambda})\right){\bf U}^\top{\bm f}\\
&:={\bf U}\sum_{m=0}^{M-1}{\bf T}_m{\bf U}^\top{\bm f},
\end{split}
\label{r_signal}
\end{equation}
where ${\bf T}_m$ is represented as follows:
\begin{equation}
{\bf T}_m=\begin{bmatrix}
{\bf T}_m(0,0) &\cdots&{\bf T}_m(0,M-1)\\
\vdots&\ddots&\vdots\\
{\bf T}_m(M-1,0) &\cdots&{\bf T}_m(M-1,M-1)\\
\end{bmatrix}.
\end{equation}
Each block in ${\bf T}_m$ is represented as
\begin{equation}
\begin{split}
&{\bf T}_m(k,k+p)=\\
&\begin{cases}
G_m({\bf \Lambda}_k)H_m({\bf \Lambda}_{k+p}) & \text{for any } m\text{ and even }p\\
G_m({\bf \Lambda}_k){\bf J}_{N/M}H_m({\bf \Lambda}_{k+p}) & \text{for even }m \text{ and odd }p.\\
-G_m({\bf \Lambda}_k){\bf J}_{N/M}H_m({\bf \Lambda}_k) & \text{for odd }m \text{ and odd }p.\\
\end{cases}
\end{split}
\end{equation}
If the filter set satisfies \eqref{AC_g}, $\sum_{m=0}^{M-1}{\bf T}_m(k,k+p)$
becomes
\begin{equation}
\begin{split}
&\sum_{m=0}^{M-1}{\bf T}_m(k,k)=\sum_{m=0}^{M-1}G_m({\bf \Lambda}_k)H_m({\bf \Lambda}_k)={\bf I}_{N/M}, \\
&\sum_{m=0}^{M-1}{\bf T}_m(k,k+p)=\\
&\begin{cases}
&\sum_{m=0}^{M-1}G_m({\bf \Lambda}_k)H_m({\bf \Lambda}_{k+p})={\bf 0}_{N/M}\\
 &\ \ \ \ \ \ \ \ \ \ \ \ \ \ \ \ \ \ \ \ \ \ \ \ \ \ \ \ \ \ \ \ \ \ \ \ \ \ \ \ \ 
 \text{for even }p\  (p\neq 0), \\
&\sum_{m=0}^{M-1}(-1)^m G_m({\bf \Lambda}_k){\bf J}_{N/M}H_m({\bf \Lambda}_{k+p})={\bf 0}_{N/M} \\
 &\ \ \ \ \ \ \ \ \ \ \ \ \ \ \ \ \ \ \ \ \ \ \ \ \ \ \ \ \ \ \ \ \ \ \ \ \ \ \ \ \  \text{for odd }p.
\end{cases}
\end{split}
\end{equation}
Then, \eqref{r_signal} becomes
\begin{equation}
\widehat{\bm f}={\bf U}\sum_{m=0}^{M-1}{\bf T}_m{\bf U}^\top{\bm f}={\bf U}{\bf I}_N{\bf U}^\top{\bm f}={\bm f}.
\label{tx}
\end{equation}
\eqref{tx} indicates that the the input signal can perfectly be recovered from the decomposed signals.
\end{proof}

\subsection{Filter Design}
We use filter sets obtained from $M$-channel real-valued linear phase filter banks applied in traditional signal processing by using the design method introduced in \cite{Sakiya2016b}. Let ${\tt H}_m(\omega)$ and ${\tt G}_m(\omega)$ for $m=0,\ldots,M-1$ be arbitrary analysis and synthesis filters in the frequency domain, respectively. We assume that $M$ is even and the filter length is $L=MK$, where $K$ is an arbitrary integer. Furthermore, the even-indexed filters are symmetric, and the odd-indexed filters are anti-symmetric, which are well-accepted assumptions in traditional signal processing, as described in\cite{Gan2003,Tran2000}. The filter ${\tt H}_m(\omega)$ can be represented as 
\begin{equation}
{\tt H}_m(\omega)=
\begin{cases}
e^{-j\frac{L-1}{2}\omega}\Re({\tt H}_m(\omega))& \text{for even $m$},\\
je^{-j\frac{L-1}{2}\omega}\Re({\tt H}_m(\omega))& \text{for odd $m$}.\end{cases}
\label{lpfil}
\end{equation}
Then, the converted filters in the graph frequency domain can be expressed as
\begin{equation}
H_m(\lambda)=
\begin{cases}
e^{j\frac{L-1}{2}\alpha(\lambda)}{\tt H}_m(\alpha(\lambda))& \text{for even $m$},\\
-je^{j\frac{L-1}{2}\alpha(\lambda)}{\tt H}_m(\alpha(\lambda))& \text{for odd $m$},
\end{cases}
\label{conversion}
\end{equation}
where $\alpha(\cdot)$ is a function mapping from $\lambda\in[0,\lambda_\text{max}]$ to $\omega\in[-\pi, \pi]$. The filters have the same filter characteristics as the corresponding classical filter bank in the frequency domain. The filter sets in traditional signal processing satisfy the following perfect reconstruction condition for any $\omega_k\in[0, 2\pi]$ and integer $(-\lfloor M\omega/(2\pi)\rfloor, (M-1)-\lfloor M\omega/(2\pi)\rfloor]$ ($p\neq 0$)\cite{Strang1996}:
\begin{align}
\sum_{m=0}^{M-1}{\tt G}_m^*(\omega_k){\tt H}_m(\omega_k)&=c^2,
\label{eq:classic_PR1}\\
\sum_{m=0}^{M-1}{\tt G}_m^*(\omega_k){\tt H}_m(\omega_k+2\pi p/M)&=0.
\label{eq:classic_PR2}
\end{align}
The following theorem can then be stated.

\begin{theorem}
The spectral graph filters obtained from \eqref{conversion} satisfy the perfect reconstruction conditions of Theorem 1 if $\alpha(\cdot)$ satisfies the following condition:
\begin{equation}
\begin{split}
\alpha(\lambda_{k+pN/M})&=
\begin{cases}
\pi\lambda_{k}/\lambda_{\text{max}}& \text{for } p=0,\\
\alpha(\lambda_{k})+2\pi p/M & \text{for even } p,\\
\end{cases}\\
\alpha(\lambda_{(2p+1)N/M-k})&=
2\pi-\left(\alpha(\lambda_{k})+2\pi p/M\right)\\
&\ \ \ \ \ \ \ \ \ \ \ \ \ \ \ \ \ \ \ \ \ \ \ \text{for odd } p,\\
\label{eq:theo}
\end{split}
\end{equation}
for $k=0,\ldots,N/M-1$ and $p=0,\ldots,M-1$.
\begin{proof}
The proof is described in Appendix.
\end{proof}
\end{theorem}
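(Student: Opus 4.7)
The plan is to reduce the graph-domain perfect reconstruction conditions of Theorem 1 to the classical conditions \eqref{eq:classic_PR1}--\eqref{eq:classic_PR2} via the conversion \eqref{conversion}. First I would cancel the phase factors in \eqref{conversion} against those in \eqref{lpfil} and observe that, for both parities of $m$, $H_m(\lambda)=A_m(\alpha(\lambda))$ where $A_m(\omega):=\Re({\tt H}_m(\omega))$ is the real amplitude; analogously $G_m(\lambda)=B_m(\alpha(\lambda))$ with $B_m:=\Re({\tt G}_m)$. Direct multiplication then shows, independently of the parity of $m$,
\[
{\tt G}_m^*(\omega_1){\tt H}_m(\omega_2)=e^{-j(L-1)(\omega_2-\omega_1)/2}B_m(\omega_1)A_m(\omega_2),
\]
so after dividing by the common phase the classical PR conditions reduce to the purely real amplitude identities $\sum_m B_m(\omega)A_m(\omega)=c^2$ and $\sum_m B_m(\omega)A_m(\omega+2\pi p/M)=0$ for $p\neq 0$.

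The next step is to verify the three equations of Theorem 1 by case analysis on $p$. For $p=0$ the choice $\alpha(\lambda_k)=\pi\lambda_k/\lambda_{\max}$ substituted into the first amplitude identity at $\omega=\alpha(\lambda_k)$ immediately yields $\sum_m G_m(\lambda_k)H_m(\lambda_k)=c^2$. For even $p\neq 0$ the assumption $\alpha(\lambda_{k+Np/M})=\alpha(\lambda_k)+2\pi p/M$ lets me replace $H_m(\lambda_{k+Np/M})$ by $A_m(\omega+2\pi p/M)$, and the aliasing identity gives the desired vanishing sum.

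The main obstacle, and the reason the $(-1)^m$ appears in Theorem 1, is the odd-$p$ case. Here I would exploit the linear-phase structure of the filter bank: since the even-indexed classical filters are symmetric and the odd-indexed ones are antisymmetric, their amplitudes satisfy $A_m(-\theta)=(-1)^m A_m(\theta)$, and combining this with $2\pi$-periodicity yields $A_m(2\pi-\theta)=(-1)^m A_m(\theta)$. Applied at $\theta=\omega+2\pi p/M$ together with the assumption $\alpha(\lambda_{(2p+1)N/M-k})=2\pi-\alpha(\lambda_k)-2\pi p/M$, this gives $(-1)^m H_m(\lambda_{(2p+1)N/M-k})=A_m(\omega+2\pi p/M)$, hence
\[
\sum_{m=0}^{M-1}(-1)^m G_m(\lambda_k)H_m(\lambda_{(2p+1)N/M-k})=\sum_{m=0}^{M-1}B_m(\omega)A_m(\omega+2\pi p/M)=0
\]
by the classical aliasing identity at shift $2\pi p/M$. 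The $(-1)^m$ is exactly what absorbs the sign flip coming from the antisymmetric (type III/IV) filters under the reflection $\omega\mapsto 2\pi-\omega$; without it the odd-indexed amplitudes would cancel the even-indexed ones and the identity would fail.
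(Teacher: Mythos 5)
Your proposal is correct and follows essentially the same route as the paper's Appendix: cancel the phase factors of \eqref{lpfil} against those of \eqref{conversion} so that $H_m(\lambda)$ and $G_m(\lambda)$ reduce to the real amplitude responses evaluated at $\alpha(\lambda)$, then use the three cases of \eqref{eq:theo} to map each graph-domain alias term onto a classical alias term at shift $2\pi p/M$, with the reflection $\omega\mapsto 2\pi-\omega$ handled via the symmetry/antisymmetry of the amplitudes (the paper carries the exponentials through every line and uses ${\tt H}_m(2\pi-\theta)={\tt H}_m^*(\theta)$ instead, but the content is identical). One small inaccuracy: since $L=MK$ with $M$ even, the amplitude responses are $2\pi$-\emph{anti}periodic, $A_m(\omega+2\pi)=(-1)^{L-1}A_m(\omega)=-A_m(\omega)$, so the correct reflection identity is $A_m(2\pi-\theta)=(-1)^{L-1}(-1)^m A_m(\theta)$ rather than $(-1)^m A_m(\theta)$; this extra factor is independent of $m$, factors out of the alias-cancellation sum, and therefore does not affect your conclusion.
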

\begin{figure}[t]
  \centering
	\subfigure[][Minnesota traffic graph (MTG) ]{\includegraphics[width=.4\linewidth]{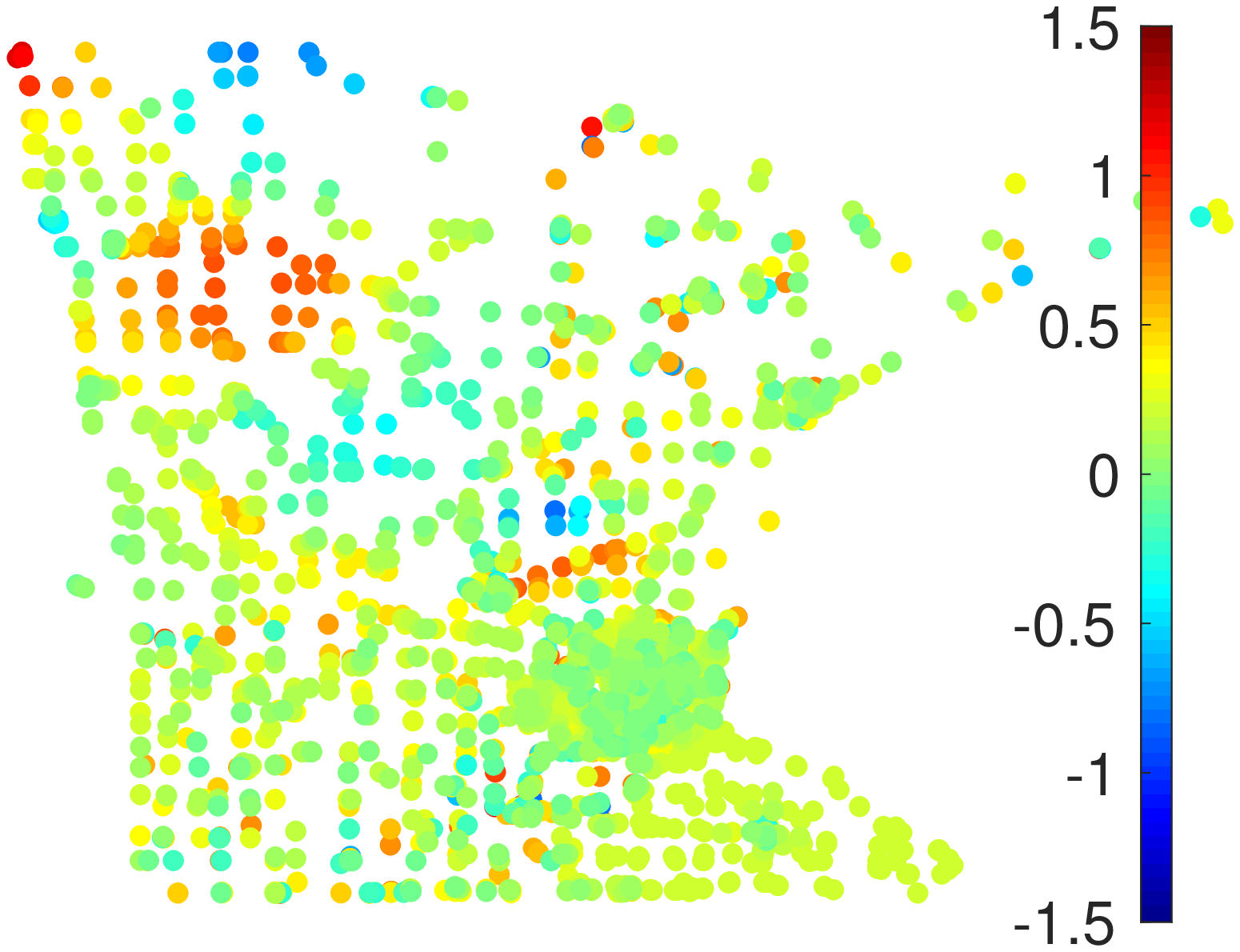}}
	\subfigure[][Spectrum of (a)]{\includegraphics[width=.4\linewidth]{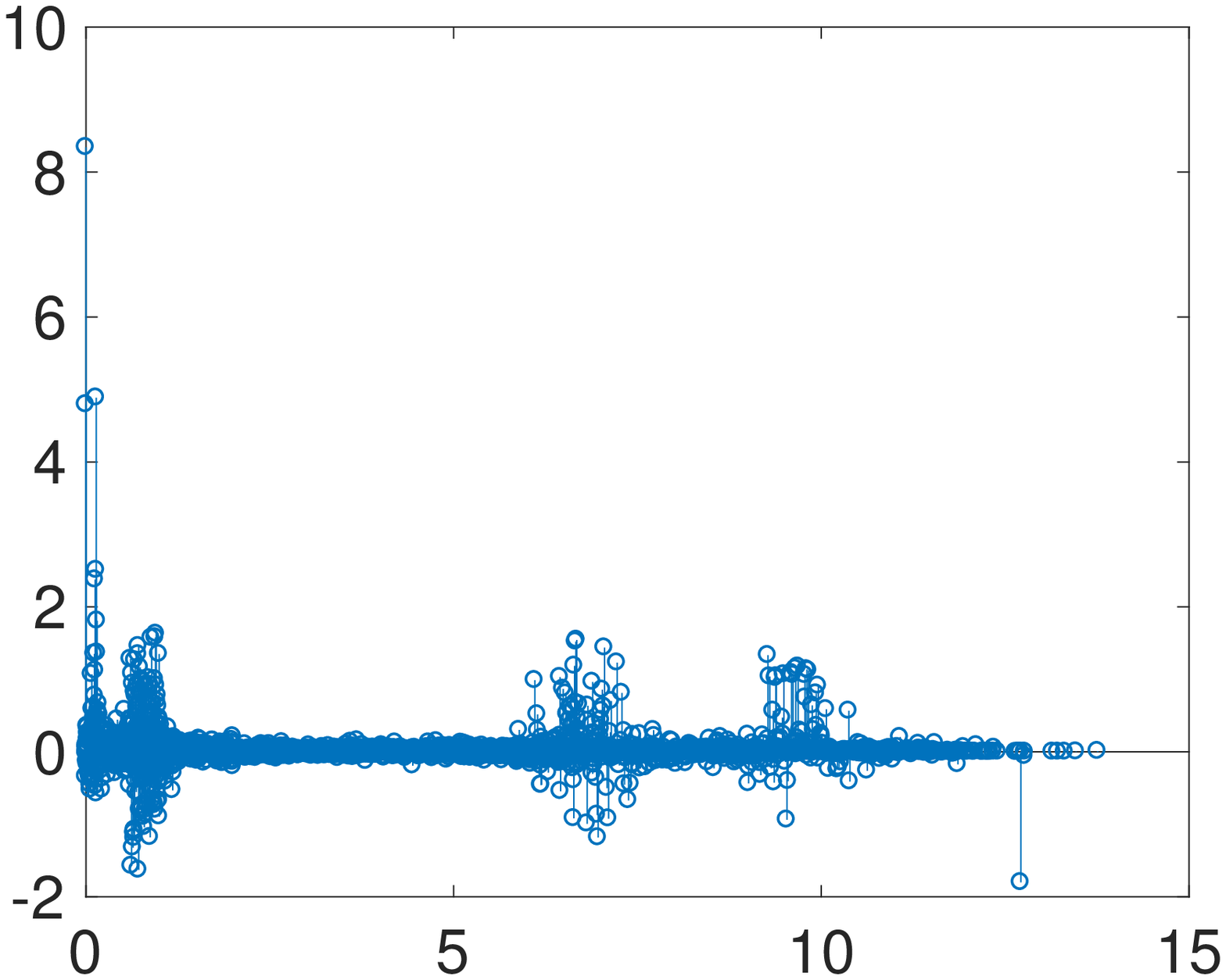}}
	\subfigure[][Random sensor network graph (RSNG)]{\includegraphics[width=.4\linewidth]{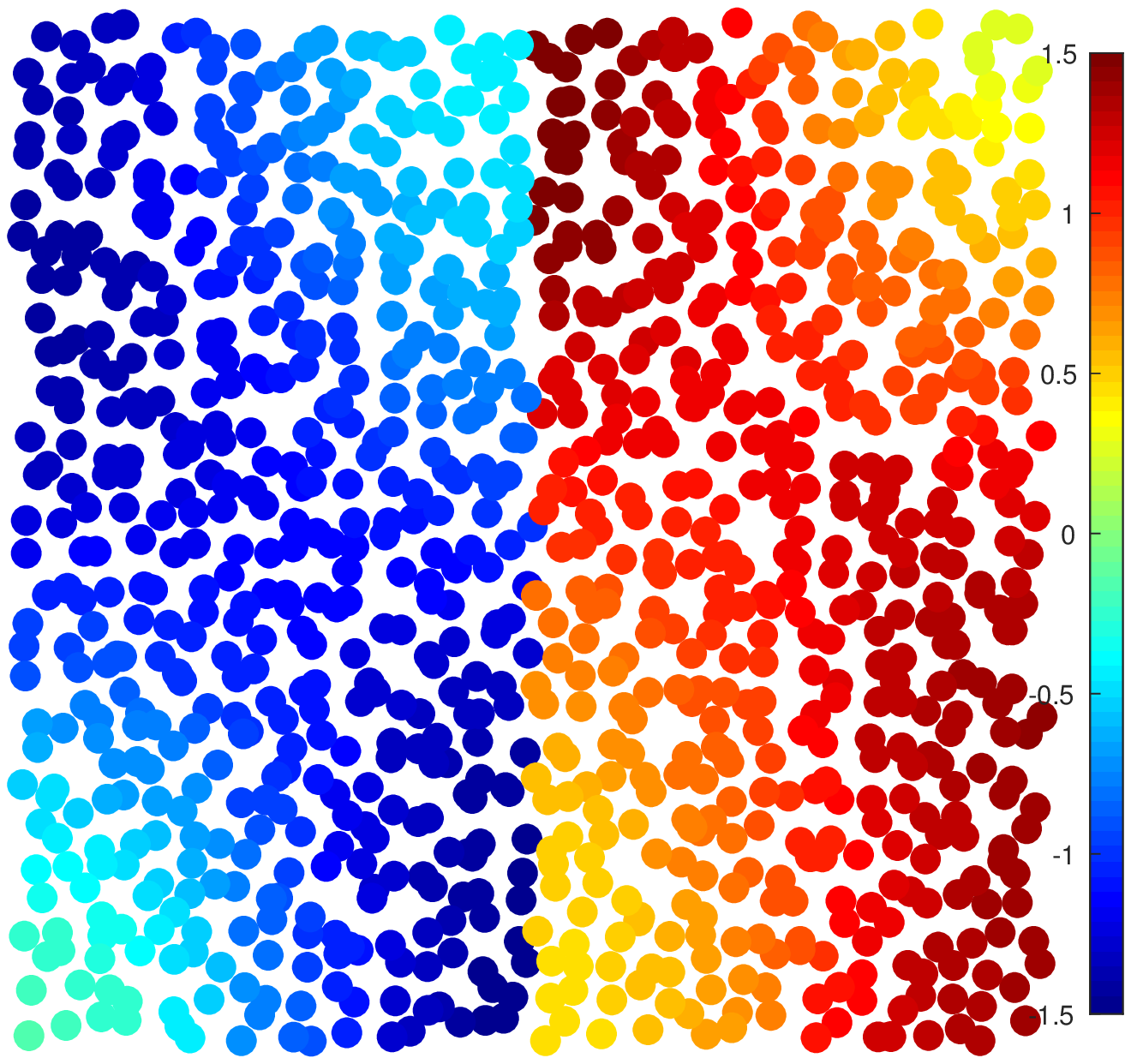}}
	\subfigure[][Spectrum of (c)]{\includegraphics[width=.4\linewidth]{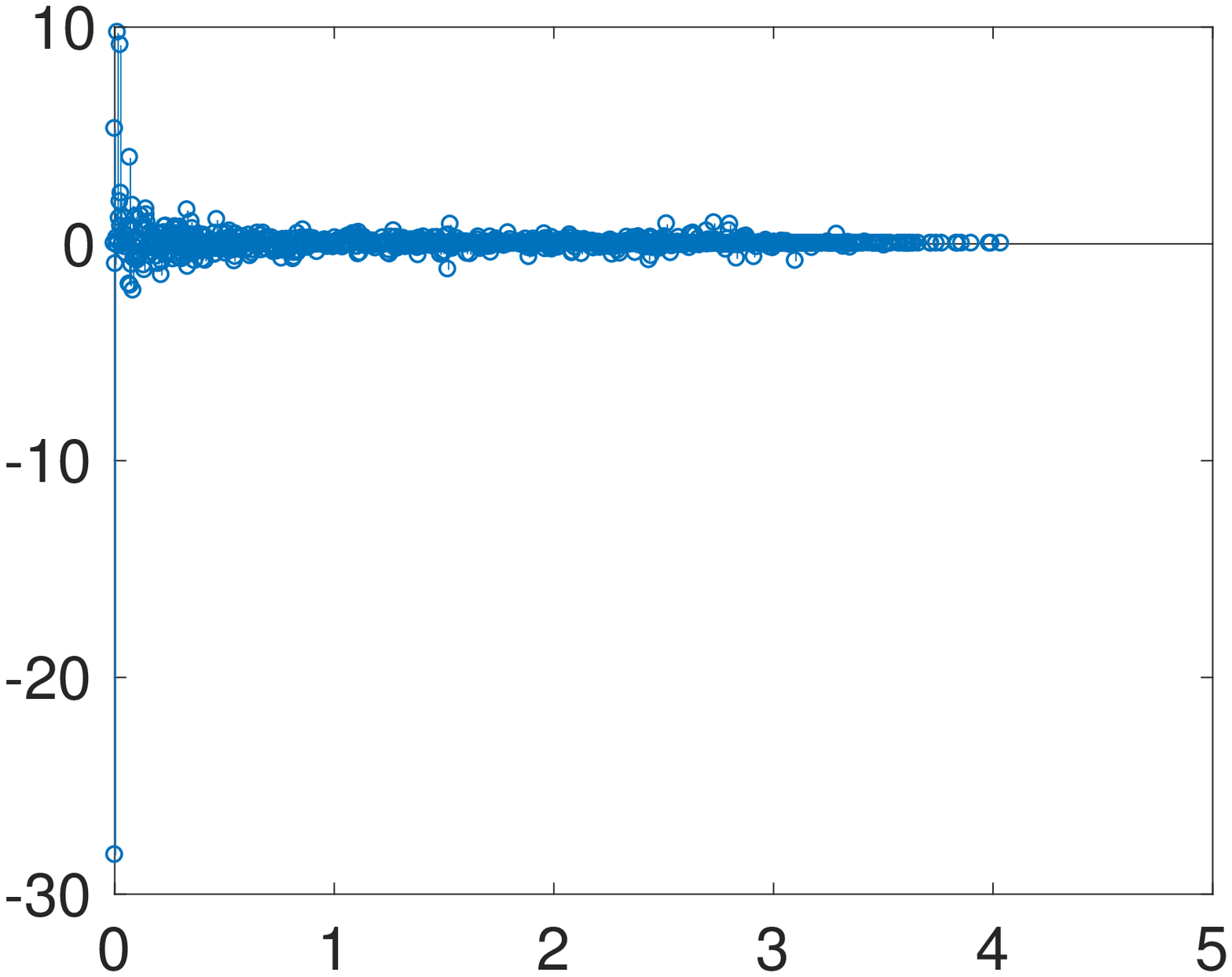}}
  \caption{Original signals.}
  \label{orig}
\end{figure}
\begin{figure}[t]
  \centering
\subfigure[][$4$-GraphSS-L (MTG)]{\includegraphics[width=.4\linewidth]{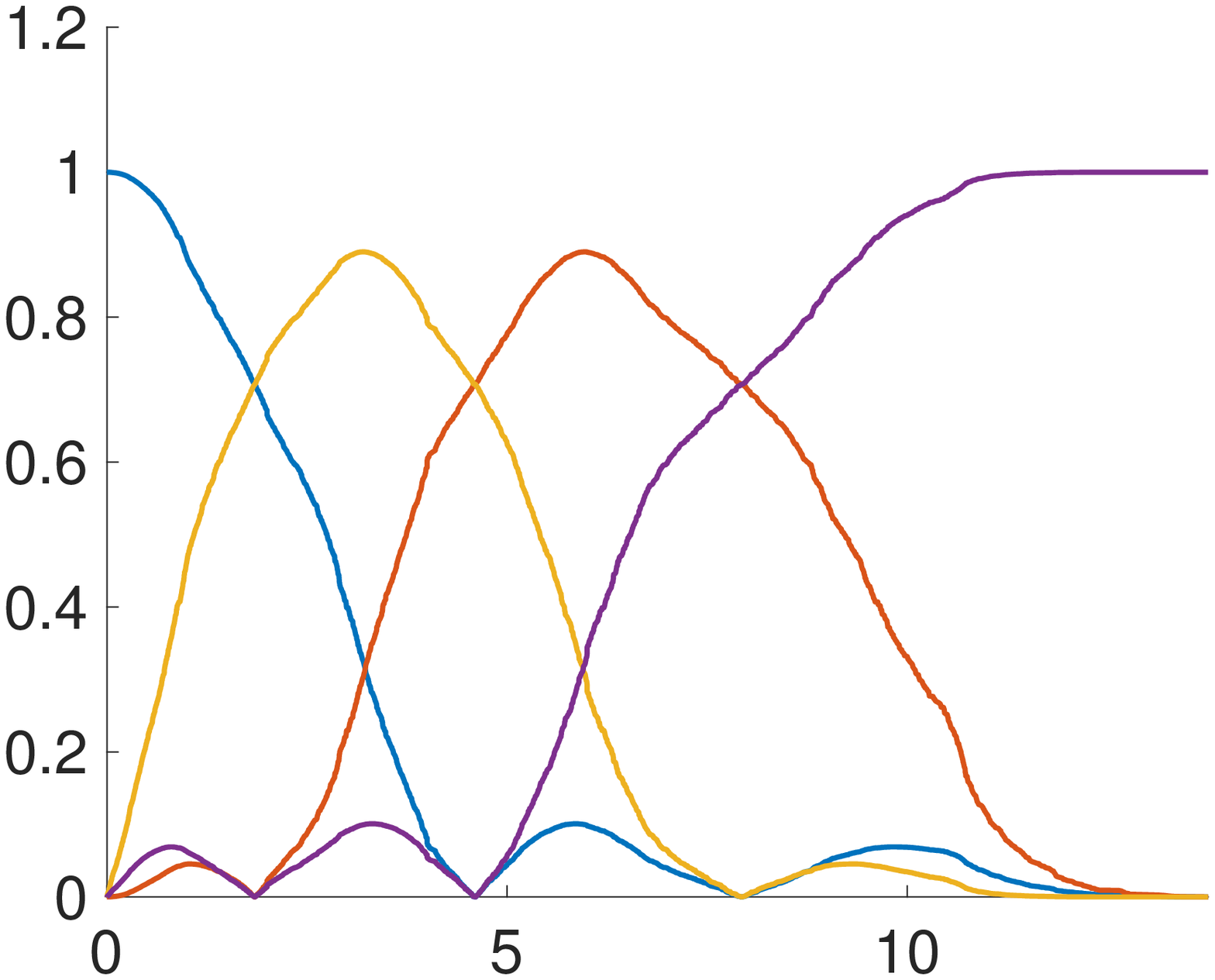}}
\subfigure[][$4$-GraphSS-L (RSNG)]{\includegraphics[width=.4\linewidth]{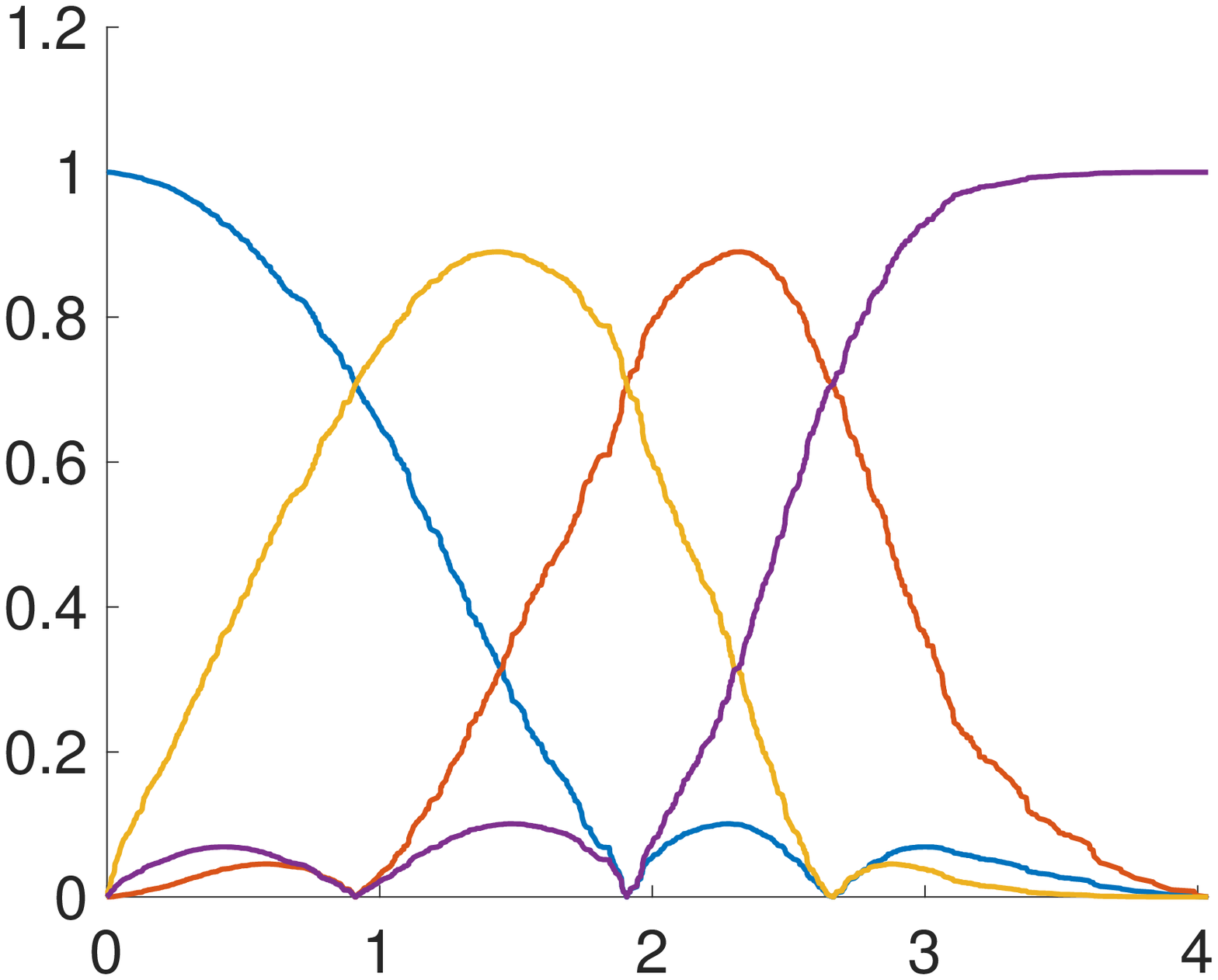}}
\subfigure[][$8$-GraphSS-I (MTG)]{\includegraphics[width=.36\linewidth]{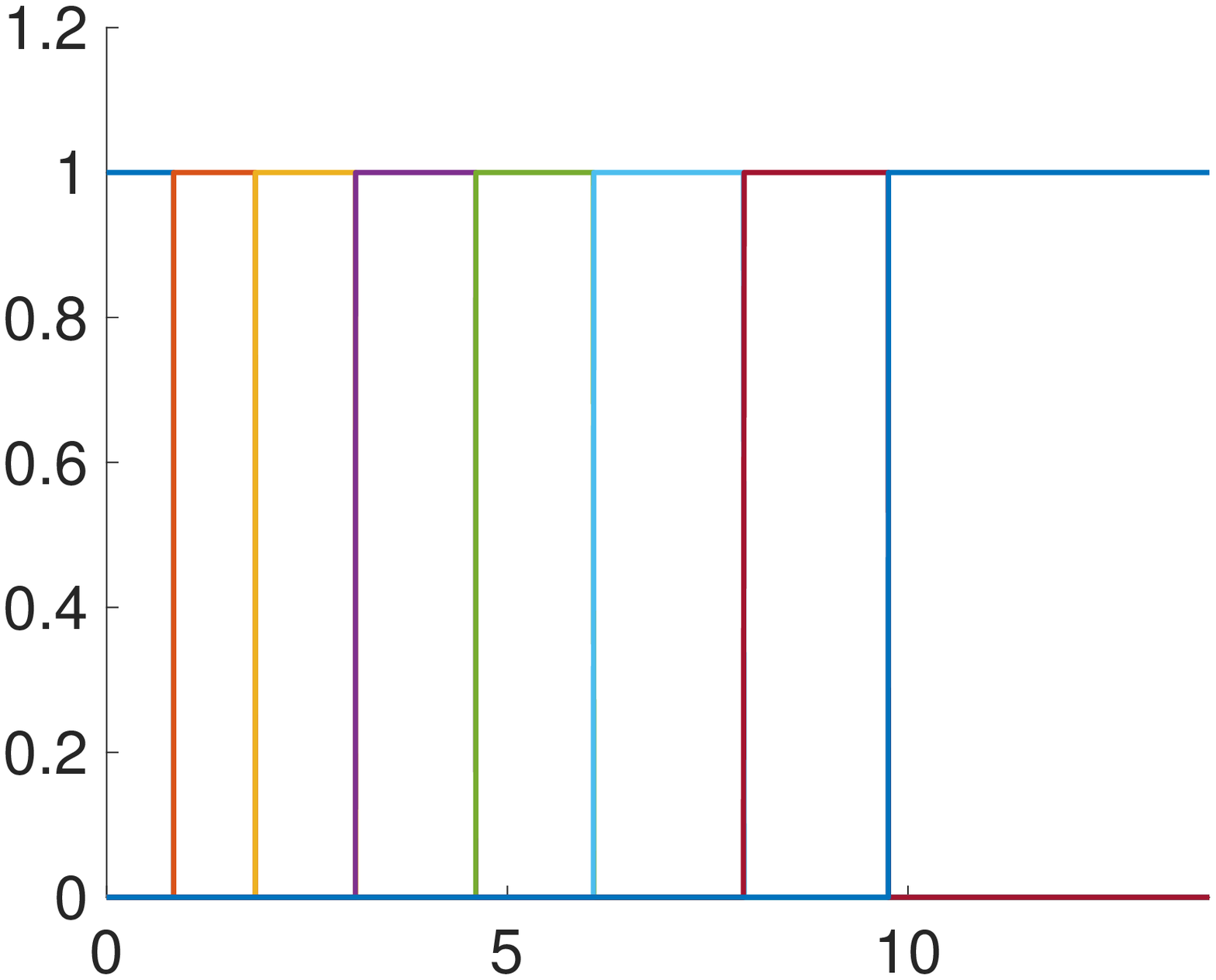}}
\subfigure[][$8$-GraphSS-I (RSNG)]{\includegraphics[width=.4\linewidth]{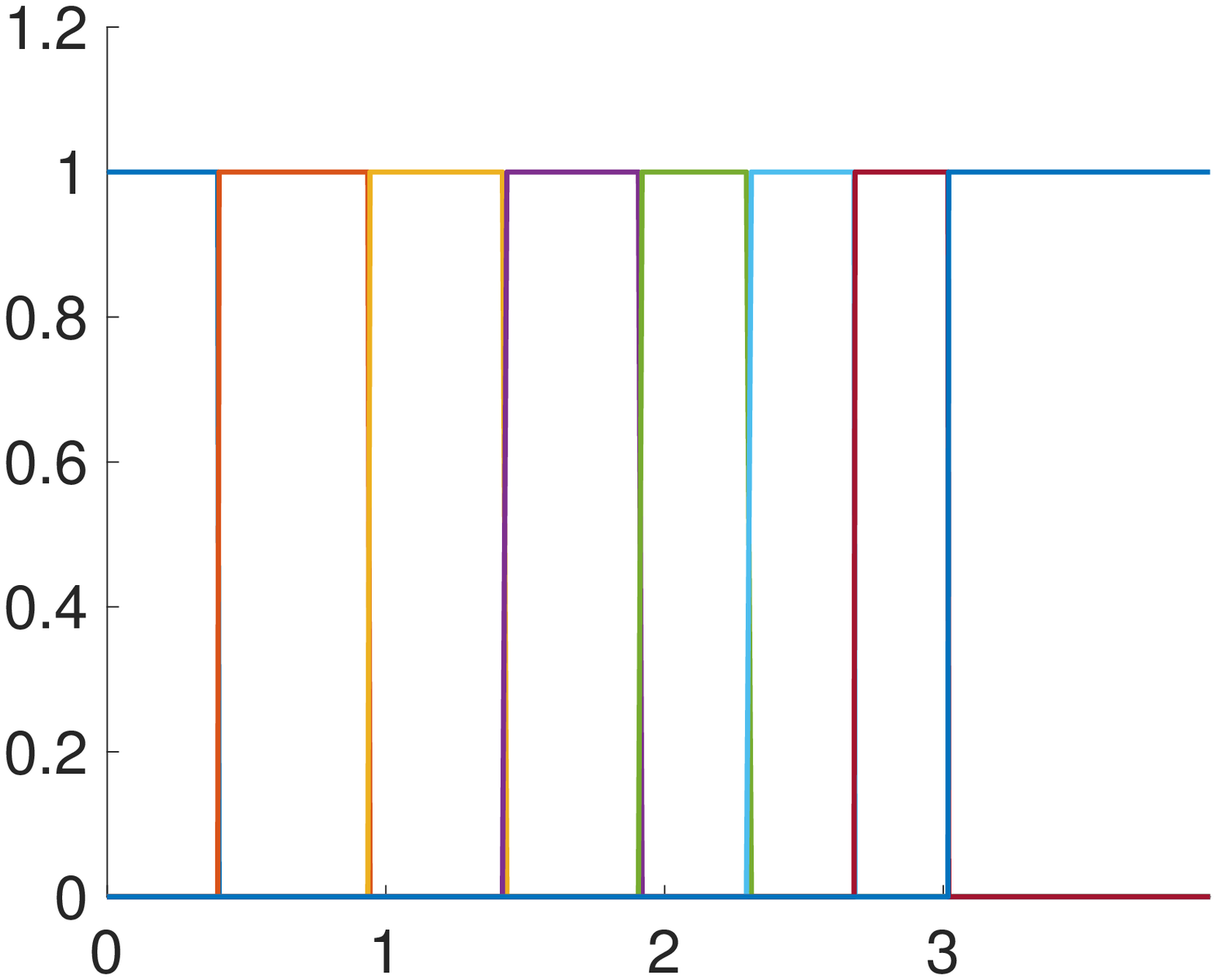}}
  \caption{Filter sets obtained using the eigenvalue distribution of each graph.}
  \label{dct}
\end{figure}

\begin{figure}[t]
  \centering
	\subfigure[][Minnesota traffic graph]{\includegraphics[width=.82\linewidth]{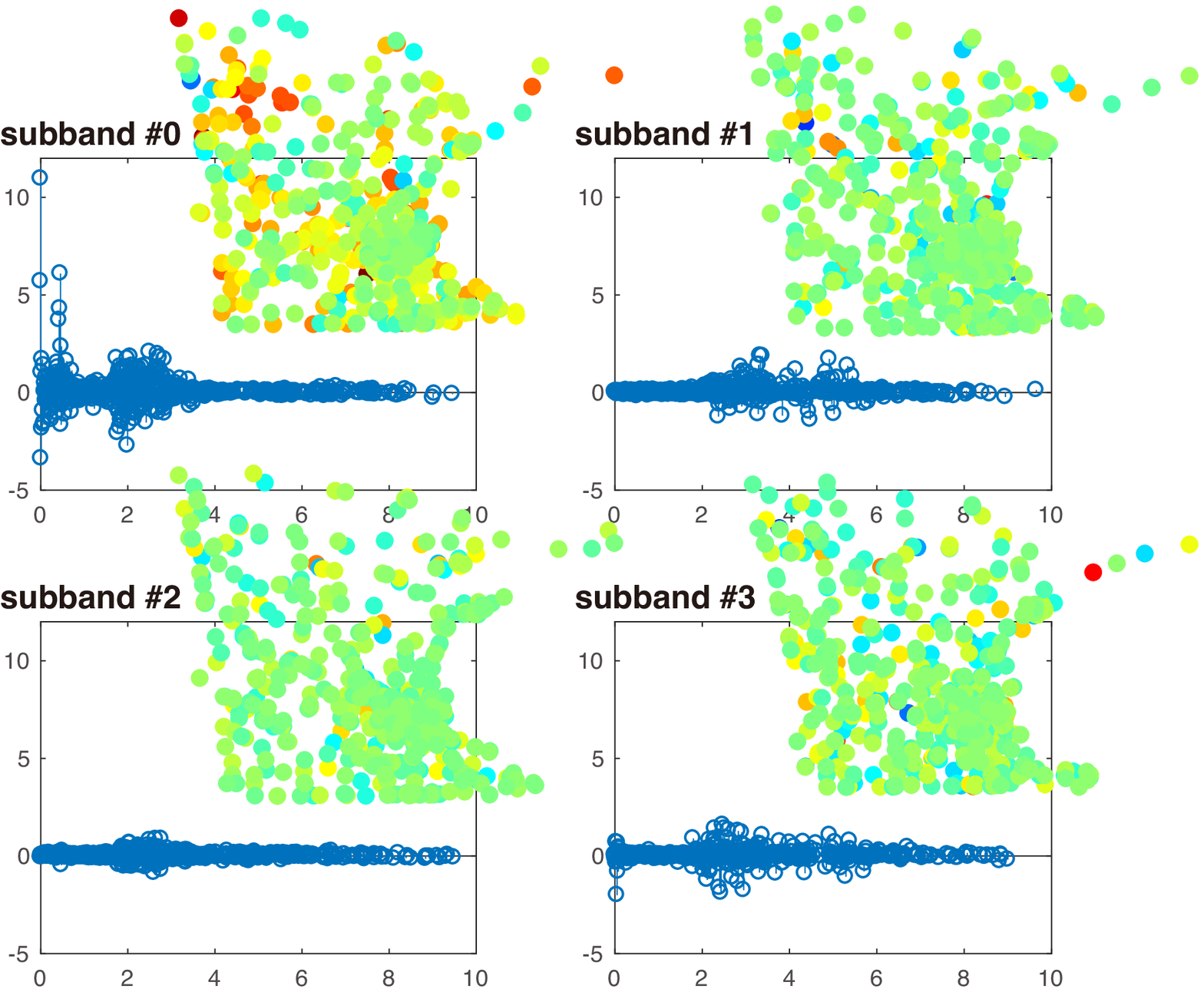}}
	\subfigure[][Random sensor network graph]{\includegraphics[width=.82\linewidth]{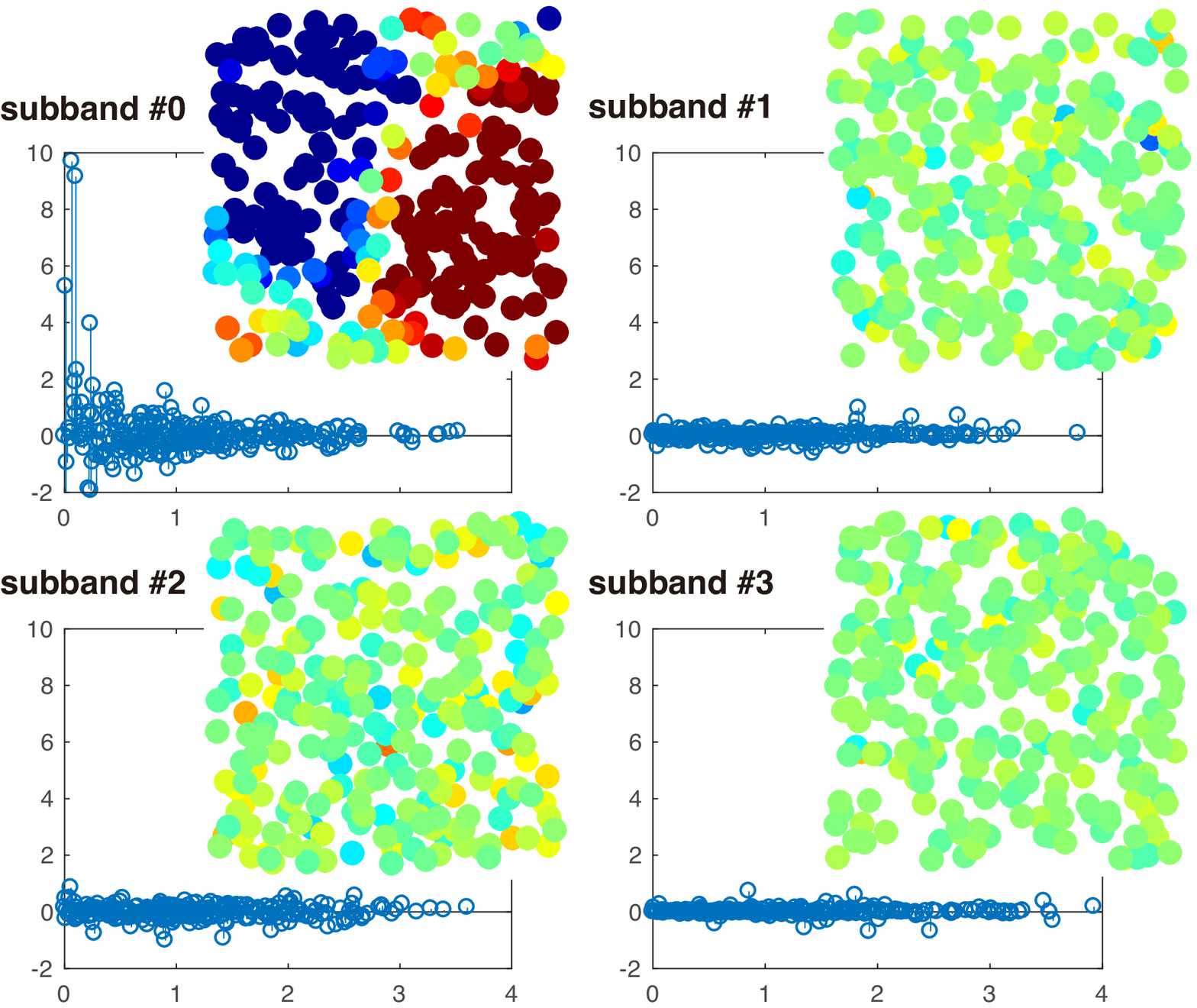}}
  \caption{Decomposition results in the vertex and graph frequency domains.}
  \label{8-ch}
\end{figure}

\section{Experimental Results}
\label{sec:exp}

Hereafter, we abbreviate the proposed filter bank as $M$-\textit{graphSS}, where SS refers to the spectral sampling, and the filter bank using vertex sampling \cite{Narang2012, Narang2013} is abbreviated as \textit{graphVS}, where VS refers to vertex sampling.

\subsection{Signal Decomposition}
The graph signals shown in Fig. \ref{orig} are decomposed through $4$-GraphSS with lapped orthogonal transform (LOT) based filters (denoted as $4$-GraphSS-L) \cite{Malvar1989}, as shown in Figs. \ref{dct} (a) and (b). The filters are orthogonal, and therefore, we can use the same filter sets on the synthesis side.  
The decomposed results are shown in Fig. \ref{8-ch}. The vertices maintained in each subband are selected according to the vertex indices. We can see that the decomposed signals contain the characteristics of the spectrum in the original signal, and the transform divides the original signal into different frequency ranges.

\begin{table}[t]
  \centering 
  \caption{Denoised Results: SNR (dB)}\label{tab:min_denoise}
  \begin{tabular}{c|c|c|c|c|c|c}
\hline
$\sigma$  &1& 1/2 & 1/4 & 1/8 & 1/16 & 1/32\\\hline\hline
\multicolumn{7}{c}{\textit{Random Sensor Network Graph}}\\\hline
GFT	&	{\bf 6.44}	&	9.40	&	13.92	&	19.22	&	25.06	&	30.86	\\
GraphBior	&	4.45	&	8.46	&	13.57	&	19.02	&	24.97	&	30.82	\\
GraphVS-I&	4.43	&	8.54	&	13.58	&	19.00	&	24.98	&	30.82	\\
$2$-GraphSS-I	&	3.36	&	9.13	&	14.44	&	19.66	&	25.29	&	30.95	\\
$8$-GraphSS-I	&	4.71	&	{\bf 10.13}	&	{\bf 15.60}	&	{\bf 20.39}	&	{\bf 25.68}	&	{\bf 31.11}	\\\hline
noisy	&	0.69	&	6.78	&	12.83	&	18.76	&	24.88	&	30.80	\\\hline
\multicolumn{7}{c}{\textit{Minnesota Traffic Graph}}\\\hline
GFT	&	-1.71	&	7.63	&	11.80	&	16.92	&	22.64	&	28.53	\\
GraphBior	&	2.10	&	6.13	&	11.08	&	16.58	&	22.51	&	28.48	\\
GraphVS-I&	2.24	&	6.21	&	11.10	&	16.60	&	22.51	&	28.48	\\
$2$-GraphSS-I	&	0.19	&	5.66	&	11.12	&	16.72	&	22.58	&	28.51	\\
$8$-GraphSS-I	&	{\bf 3.47	}&	{\bf 7.94	}&	{\bf 12.36}	&	{\bf 17.27}	&	{\bf 22.79}	&	{\bf 28.58}	\\\hline
noisy	&	-1.71	&	4.35	&	10.32	&	16.32	&	22.42	&	28.45	\\\hline
  \end{tabular}
  \label{table:image}
\end{table}

\subsection{Denoising}
To evaluate the performance of the proposed spectral graph filter bank, the denoising is performed. 
We used $8$-GraphSS with ideal filters (denoted as $8$-GraphSS-I), as shown in Figs. 4 (c) and (d), which is compared with the GFT, graphBior \cite{Narang2013}, $2$-GraphVS with an ideal filter \cite{Narang2012} (denoted as GraphVS-I),  and $2$-GraphSS-I \cite{Watana2018}.
For a fair comparison, two-channel GFBs use $7$-level octave decompositions, i.e., they also have eight subbands. 
The input signal is corrupted by additive white Gaussian noise.
For denoising, the coefficients in the $Y$th subband are thresholded \cite{Chang2000a} as follows:
\begin{equation}
\begin{split}
T=\sigma^2/\sqrt{\max(\sigma_Y^2-\sigma^2,0)},
\end{split}
\end{equation}
where $\sigma$ is the standard deviation of the noise, and $\sigma_Y$ is the variance of coefficients in the $Y$th subband.
 The input signals are shown in Fig. \ref{orig}, and the denoising results are shown in Table I. The proposed method shows better results than conventional methods in most cases.

\section{Conclusion}
\label{sec:conc}
In this paper, $M$-channel spectral graph filter banks using sampling in the graph frequency domain are proposed. They can be applied to any graph signals and can use any variation operators. We demonstrated perfect reconstruction conditions and filter design methods. The proposed graph filter bank can decompose the signals while maintaining the spectrum of the original graph signal, and in an experiment on denoising showed better results than a conventional transform with maximum decimation.

\section*{Appendix}
We assume that $\omega_k=\alpha(\lambda_k)$ for $k=0,\ldots,N/M-1$. 
If $m$ is even and $k<N/M$, from \eqref{conversion} and \eqref{eq:theo}, $G_m(\lambda_k)H_m(\lambda_k)$ can be rewritten as
\begin{equation}
\begin{split}
&G_m(\lambda_k)H_m(\lambda_k)\\
&=e^{j\frac{L-1}{2}\alpha(\lambda_k)}{\tt G}_m(\alpha(\lambda_k))e^{j\frac{L-1}{2}\alpha(\lambda_k)}{\tt H}_m(\alpha(\lambda_k))\\
&=e^{-j\frac{L-1}{2}\alpha(\lambda_k)}e^{j\frac{L-1}{2}\alpha(\lambda_k)}\Re({\tt G}_m(\alpha(\lambda_k)))\\
&\phantom{=\ }\times e^{j\frac{L-1}{2}\alpha(\lambda_k)}e^{-j\frac{L-1}{2}\alpha(\lambda_k)}\Re({\tt H}_m(\alpha(\lambda_k)))\\
&={\tt G}_m^*(\omega_k){\tt H}_m(\omega_k).
\end{split}
\label{pro_1}
\end{equation}

If $m$ and $p$ are even, from \eqref{lpfil}, \eqref{conversion} and \eqref{eq:theo}, $G_m(\lambda_k)H_m(\lambda_{(2p+1)N/M-k})$ can be rewritten as
\begin{equation}
\begin{split}
&G_m(\lambda_k)H_m(\lambda_{(2p+1)N/M-k})\\
&=-je^{j\frac{L-1}{2}\alpha(\lambda_k)}{\tt G}_m(\alpha(\lambda_k))(-j)e^{j\frac{L-1}{2}(2\pi-(\alpha(\lambda_k)+2\pi p/M))}\\
&\phantom{=\ }\times {\tt H}_m(2\pi-(\alpha(\lambda_k)+2\pi p/M))\\
&=-je^{j\frac{L-1}{2}\omega_k}{\tt G}_m(\omega_k)(-j)e^{j\frac{L-1}{2}(2\pi-(\omega_k+2\pi p/M))}\\
&\phantom{=\ }\times {\tt H}_m^*(\omega_k+2\pi p/M)\\
&=-je^{j\frac{L-1}{2}\omega_k}je^{-j\frac{L-1}{2}\omega_k}\Re({\tt G}_m(\omega_k))\\
&\phantom{=\ }\times (-j)e^{j\frac{L-1}{2}(2\pi-(\omega_k+{2\pi p/M}))}(-j)e^{j\frac{L-1}{2}(\omega_k+{2\pi p/M})}\\
&\phantom{=\ }\times \Re({\tt H}_m(\omega_k+{2\pi p/M}))\\
&=e^{j\frac{L-1}{2}({2\pi p/M})}{\tt G}_m^*(\omega_k){\tt H}_m(\omega_k+{2\pi p/M})\\
\end{split}
\label{pro_1}
\end{equation}
for any $k$. By using similar approach, the following equations are derived:
\begin{equation}
G_m(\lambda_k)H_m(\lambda_k)={\tt G}_m^*(\omega_k){\tt H}_m(\omega_k)
\end{equation}
for any $k$ and $m$,
\begin{equation}
G_m(\lambda_k)H_m(\lambda_{k+pN/M})=e^{j\frac{L-1}{M}p\pi}{\tt G}_m^*(\omega_k){\tt H}_m(\omega_k+{2\pi p/M})
\end{equation}
for any $k$ and $m$, and even $p$, and
\begin{align}
&G_m(\lambda_k)H_m(\lambda_{(2p+1)N/M-k})=\notag\\
&\begin{cases}
e^{j\frac{L-1}{2}({2\pi p/M})}{\tt G}_m^*(\omega_k){\tt H}_m(\omega_k+{2\pi p/M})& \text{if } m \text{ is even},\\
-e^{j\frac{L-1}{2}({2\pi p/M})}{\tt G}_m^*(\omega_k){\tt H}_m(\omega_k+{2\pi p/M})& \text{if } m \text{ is odd},
\end{cases}
\end{align}
for any $k$ and $m$, and odd $p$.
Then, the following relationships are satisfied:
\begin{align}
&\sum_{m=0}^{M-1}G_m(\lambda_k)H_m(\lambda_k)=\sum_{m=0}^{M-1}{\tt G}_m^*(\omega_k){\tt H}_m(\omega_k)=c^2,\\
&\sum_{m=0}^{M-1}G_m(\lambda_k)H_m(\lambda_{k+2pN/M})\notag\\
&=e^{j\frac{L-1}{M}p\pi}\sum_{m=0}^{M-1}{\tt G}_m^*(\omega_k){\tt H}_m(\omega_k+{2\pi p/M})=0,\\
&\sum_{m=0}^{M-1}(-1)^mG_m(\lambda_k)H_m(\lambda_{(2p+1)N/M-k})\notag\\
&
=e^{j\frac{L-1}{2}({2\pi p/M})}\sum_{m=0}^{M-1}{\tt G}_m^*(\omega_k){\tt H}_m(\omega_k+{2\pi p/M})=0
\end{align}
They coincide with \eqref{AC_g}.


\end{document}